\theoremstyle{plain}%
\newtheorem{theorem}{Theorem}[section]
\newtheorem{lemma}[theorem]{Lemma}
\newtheorem{conj}{Conjecture}
\newtheorem*{corollary}{Corollary}
\theoremstyle{definition}
\newtheorem{definition}[theorem]{Definition}
\newtheorem{example}[theorem]{Example}
\theoremstyle{remark}
\newtheorem*{rem}{Remark}
\newtheorem{case}{Case}
\newtheorem*{obs}{Observation}
\newcommand{\et}[0]{\text{ and }}
\newcommand{\sC}[0]{\mathfrak{C}}
\newcommand{\RL}[0]{\preceq}
\newcommand{\NN}[0]{\mathbb{N^+}}
\newcommand{\comp}[1]{composition of \ensuremath{#1}}
\newcommand{\nst}[0]{\textsuperscript{st}}
\newcommand{\nth}[0]{\textsuperscript{th}}
\newcommand{\range}[2]{%
#1,\dotsc,#2}
\newcommand{\setrange}[2]{%
\left\{\range{#1}{#2}\right\}}
\newcommand{\nrange}[1]{%
\left[#1\right]}
\newcommand{\setsize}[1]{%
\left|#1\right|}
\newcommand{\seq}[2][k]{
#2_1#2_2\dots#2_#1
}
\DeclareMathOperator{\supp}{supp}
\DeclareMathOperator{\Occ}{Occ}
\DeclareMathOperator{\std}{st}
\DeclareMathOperator{\Pro}{Pro}
\DeclareMathOperator{\rev}{rev}
\tikzset{
        >=stealth',
        node distance = 2.5cm,
        every state/.style={thick, fill=gray!10},
        initial text=$ $
        }
\newcommand{\plotskylineoutline}[1]{%
  \foreach \j [count=\i] in {#1} {
    \ifthenelse{\j > 0}{
    \draw [thick] (\i-1,0) rectangle ++(1,\j);}
    {}
  };
  }
\newcommand{\plotskyline}[1]{%
  \foreach \j [count=\i] in {#1} {
  \ifthenelse{\j > 0}{
    \foreach \k in {1,...,\j} {
    \draw [thick] (\i-1,\k-1) rectangle ++(1,1);
    };
    }{}
  };
  }
\newcommand{\plotskylineshaded}[1]{%
  \foreach \j [count=\i] in {#1} {
  \ifthenelse{\j > 0}{
    \foreach \k in {1,...,\j} {
    \draw [lightgray, fill=lightgray, thick] (\i-1,\k-1) rectangle ++(1,1);
    };}{}
  };
  }
\newcommand{\plotskylineshadedboxes}[1]{
  \foreach \x/\y in {#1}{
  \draw [lightgray, fill=lightgray, thick] (\x-1,\y-1) rectangle ++(1,1);
  };
  }
\newcommand{\drawthegrid}[1]{%
\draw (0.01,0.01) grid (#1+0.99,#1+0.99);
}
\newcommand{\drawthepoints}[2]{%
\foreach[count=\x] \y in {#1}
\filldraw (\x,\y) circle (#2 pt);
}
\newcommand{\clpattern}[3][5]{%
    \pgfmathsetmacro\circlesize{#1+4}
    \drawthegrid{\xintNthElt{0}{\xintCSVtoList {#3}}}
    \drawthepoints{#3}{#1}
    \foreach \x in {#2}
    \draw (\x,\xintNthElt{\x}{\xintCSVtoList {#3}}) circle (\circlesize pt);
}
\title{Prolific Compositions}
\author{Michael Albert \and Murray Tannock}
\affiliation{Department of Computer Science, University of Otago, Dunedin, New Zealand}
\keywords{patterns, occurrences, compositions}
\begin{document}
\maketitle

\begin{abstract}
  Under what circumstances might every extension of a combinatorial structure contain
  more copies of another one than the original did? This property, which we call \emph{prolificity}, holds universally
  in some cases (e.g., finite linear orders) and only trivially in others (e.g., permutations).
  Integer compositions, or equivalently layered permutations, provide a middle ground.
  In that setting, there are prolific compositions for a given pattern if and only if that pattern begins and ends
  with 1. For each pattern, there is an easily constructed automaton that recognises prolific compositions for
  that pattern. Some instances where there is a unique minimal prolific composition for a pattern
  are classified.
\end{abstract}

\section{Introduction}
In combinatorics we are often interested in the ways that one structure, a pattern, can occur inside
another, the text. There are many different ways to study the nature of occurrences. At the heart of
the study of permutation patterns is the notion of pattern avoidance, studying permutations which
contain no occurrences of given pattern. Early papers in this area include
\textcites{simionRestrictedPermutations1985,stankovaForbiddenSubsequences1994}. A survey of the
current state of research on classical permutation patterns can be found in \textcite{vatterPermutationClasses2015}.
Whilst considering permutation patterns
\textcites{fulmekEnumerationPermutationsContaining2003,bonaPermutationsOneTwo1998} are amongst those
who have considered texts containing a prescribed number of occurrences.
\textcite{bonaAbsencePatternOccurrences2010} also examined the case  where some patterns definitely
occur, but other patterns are absent. Asymptotic statistics on occurrences of certain patterns as
well as the distribution of occurrences of patterns amongst texts of different lengths have been
studied by \textcites{jansonPatternsRandomPermutations2017,jansonAsymptoticStatisticsNumber2015}. It
is also possible to study the texts that permit the highest number of occurrences, or the packing
density as presented by \textcites{albertPackingDensitiesPermutations2002} and others.
\textcite{kuszmaulFastAlgorithmsFinding2018} has attempted to find efficient algorithmic methods to
count the number of occurrences of a pattern in each member of a set of texts. Some of these
problems have also been studied in the context of words, with
\textcite{bursteinCountingOccurrencesSubword2003} examining prescribed counts of patterns and
\textcite{bursteinPackingPatternsWords2002} studying packing density of words under subword order.

Another approach we can take, and the one considered in this paper is to examine how the number of
occurrences of the pattern can change as we change the  containing structure. If we add a new
element to the text then the number of occurrences of the pattern must either stay the same, or
increase. We will investigate the combinations of patterns and texts having the property that
regardless of how a new element is added to the text the number of occurrences of the pattern
increases.

For example, consider the case of finite linear orders. Since, up to isomorphism, there is only one
such structure of any given size, any \(k\) elements of a linear order on \(n\) elements, represents
an occurrence of the \(k\) element order so the number of occurrences is \(\binom{n}{k}\) and no
matter how we add a new element the number of \(k\) element suborders increases, provided that \(n
\ge k > 0\), since \begin{equation*} \binom{n+1}{k} > \binom{n}{k} \end{equation*}

This gives rise to the following definition

\begin{definition}
    A text \(\tau\), is \emph{prolific} for a pattern \(\pi\), or \emph{\(\pi\)-prolific}, if every
    proper extension of \(\tau\) contains more occurrences of \(\pi\) than \(\tau\).
\end{definition}

\begin{obs}
    If \(\tau\) is prolific for \(\pi\) and \(\tau\) is contained in \(\nu\) then \(\nu\) is
    prolific for \(\pi\) since any extension of \(\nu\) contains extensions of \(\tau\)
    and therefore also new occurrences of \(\pi\). 
    Thus, the set of texts defined by the property of being prolific for \(\pi\) is
    upwards closed set with respect to containment order.
 \end{obs}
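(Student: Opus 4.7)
The plan is to unpack the informal justification sketched in the statement into a clean containment-tracking argument. Suppose $\tau$ is $\pi$-prolific and $\tau \RL \nu$. To show $\nu$ is $\pi$-prolific, I would fix an arbitrary proper extension $\nu'$ of $\nu$ and prove that $\occ(\pi, \nu') > \occ(\pi, \nu)$.

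First, I would fix a specific embedding of $\tau$ into $\nu$, and hence into $\nu'$, and consider the substructure $\sigma$ of $\nu'$ obtained by taking this copy of $\tau$ together with the ``new'' elements of $\nu'$, meaning those present in $\nu'$ but not in $\nu$. Because $\nu'$ is a \emph{proper} extension of $\nu$, at least one new element exists, so $\sigma$ is a proper extension of $\tau$ inside $\nu'$.

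Next, I would apply prolificity of $\tau$ to $\sigma$ to get $\occ(\pi, \sigma) > \occ(\pi, \tau)$. Any occurrence of $\pi$ in $\sigma$ that was not already an occurrence in $\tau$ must use at least one of the new elements. Since the new elements lie outside $\nu$, each such occurrence lies in $\nu'$ but cannot lie in $\nu$. Combined with the fact that every occurrence of $\pi$ in $\nu$ is still an occurrence in $\nu'$, this yields $\occ(\pi, \nu') \ge \occ(\pi, \nu) + 1$, which is the desired strict inequality. The upward-closure conclusion then follows immediately: if $\nu$ is $\pi$-prolific and $\nu \RL \mu$, the same reasoning shows $\mu$ is $\pi$-prolific.

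The one subtlety, and the place where the proof could easily go wrong, is the temptation to argue directly from $\tau \RL \nu'$ that $\occ(\pi, \nu') > \occ(\pi, \tau)$ (which follows instantly since $\nu'$ is a proper extension of $\tau$). This is insufficient, because $\occ(\pi, \nu)$ may itself strictly exceed $\occ(\pi, \tau)$, so one cannot compare $\occ(\pi, \nu')$ and $\occ(\pi, \nu)$ from that inequality alone. The use of the intermediate object $\sigma$ is what isolates occurrences of $\pi$ that necessarily involve the elements of $\nu' \setminus \nu$, which is the only way to guarantee a strict gain over $\nu$ rather than merely over $\tau$.
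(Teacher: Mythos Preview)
Your proposal is correct and follows exactly the approach sketched in the paper's inline justification: the paper's phrase ``any extension of \(\nu\) contains extensions of \(\tau\) and therefore also new occurrences of \(\pi\)'' is precisely your construction of \(\sigma\) (the copy of \(\tau\) together with the new elements of \(\nu'\setminus\nu\)) and the observation that the resulting new occurrences of \(\pi\) must use those new elements. You have simply unpacked the one-line argument carefully, including a useful remark on the pitfall of comparing \(\occ(\pi,\nu')\) only against \(\occ(\pi,\tau)\).
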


In contrast to finite linear orders, among the set of all permutations only the
singleton permutation has any prolific texts. This is because any non-singleton
permutation, \(\pi\),  cannot end with both its maximum and its minimum element. If \(\pi\)
does not end with its maximum then extending any other permutation by a new
maximum element at its end creates no new copies of \(\pi\). In the other case, extension
by a new minimum element has the same effect.

Linear orders (which could be thought of simply as monotone permutations) and the set of all
permutations lie at opposite ends of the scale in terms of what patterns have prolific texts and
what those texts are. Neither one allows for an interesting study of prolificity in general. To see 
whether this concept is of interest at all we need to demonstrate the existence of a middle ground
between these two extremes. It turns out that integer compositions (or in terms of permutations,
layered permutations) occupy part of that middle ground.

\section{Basic Definitions}

Let \(n\) be a positive integer. A sequence of positive integers whose sum is \(n\) is called a
\emph{\comp{n}}. We can display any composition graphically as a \emph{skyline diagram} as shown in
\cref{fig:skyline}. Denote by \(\sC_n\) all the compositions of \(n\) and let \(\sC\) denote the set
of all compositions.

\begin{figure}[htb]
\centering
  \begin{tikzpicture}[scale=0.5]
	\plotskyline{1,3,2,4,2,1,1};
	\end{tikzpicture}
	\caption{The skyline diagram of a composition}
	\label{fig:skyline}
\end{figure}
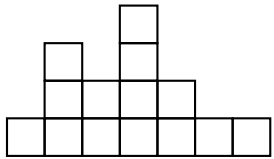

A \comp{n} can also be thought of as a partition of the set \(\nrange{n} = \setrange{1}{n}\) whose
parts form intervals. For example, the composition with part sizes \(2,1 \et 2\) corresponds to
partition of \(\nrange{5}\) into the sets \(\{1,2\}, \{3\}, \et \{4,5\}\). The correspondence
between the partition view and the skyline diagrams can be seen in \cref{fig:comprepr}. As the skyline diagram is more
compact we will use this view throughout.

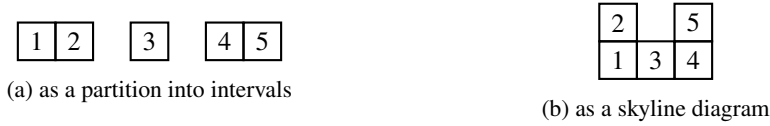
\begin{figure}[htb]
\centering
\begin{subfigure}{.45\textwidth}
\centering
\begin{tikzpicture}[scale=0.5]
  \plotskyline{1,1,0,1,0,1,1};
  \draw (0.5,0.5) node {1}
        (1.5,0.5) node {2}
        (3.5,0.5) node {3}
        (5.5,0.5) node {4}
        (6.5,0.5) node {5};
\end{tikzpicture}
\caption{as a partition into intervals}
\end{subfigure}
\begin{subfigure}{.45\textwidth}
\centering
\begin{tikzpicture}[scale=0.5]
  \plotskyline{2,1,2};
  \draw (0.5,0.5) node {1}
        (0.5,1.5) node {2}
        (1.5,0.5) node {3}
        (2.5,0.5) node {4}
        (2.5,1.5) node {5};
\end{tikzpicture}
\caption{as a skyline diagram}
\end{subfigure}
\caption{Different representations of a composition}\label{fig:comprepr}
\end{figure}

Of course there is also a correspondence between compositions of \(n\) and layered permutations
on \(n\) elements. When a composition is written as a partition into intervals, by writing the elements of each interval
from largest to smallest, and writing the intervals in order we immediately obtain the corresponding layered
permutation, as shown in \cref{ex:com2layered}.

\begin{example}
    \label{ex:com2layered}
    The composition \((\{1,2\}, \{3\},\{4,5\},\{6,7,8,9\})\) maps to the layered permutation \(213549876\).
    \begin{center}
    \begin{tikzpicture}[scale=0.5]
    \plotskyline{2,1,2,4}
    \draw (0.5,0.5) node {1}
        (0.5,1.5) node {2}
        (1.5,0.5) node {3}
        (2.5,0.5) node {4}
        (2.5,1.5) node {5}
        (3.5,0.5) node {6}
        (3.5,1.5) node {7}
        (3.5,2.5) node {8}
        (3.5,3.5) node {9}
        ;
        \begin{scope}[shift={(7,0)}, scale=0.75]
            \clpattern{}{2,1,3,5,4,9,8,7,6}
        \end{scope}
    \end{tikzpicture}
\end{center}
\end{example} 

Although we do not make use of this correspondence in the following work it is
part of the underlying motivation: to determine what patterns have prolific texts, and what those texts
are in various permutation classes.

An element of a composition is a member of the underlying set and in the skyline diagram, each
individual square also represents an element of the underlying set. The size of a composition,
\(v\), is the size of its underlying set and is denoted \(|v|\); furthermore, a \emph{part of size
\(x\)} is a part in the composition where the corresponding interval contains \(x\) integers.

Compositions can also be represented as words over \(\NN\) where each letter is the size
of the corresponding part in the composition. This is the most convenient representation in text
so we will write compositions in concatenative
notation by just listing the sizes of their parts, the composition in \cref{fig:skyline} is
therefore written as \(1324211\). When we consider compositions as words denote partial words by
greek letters and individual parts by numerals or roman letters.

A new elements can be \emph{inserted} into a composition by
\begin{enumerate}
	\item increasing the size of some part by one, or
	\item creating a new part of size one adjacent to some existing part
\end{enumerate}
There are therefore, up to isomorphism, \(k+(k+1)\) ways of inserting a new element into a
composition with \(k\) parts. Any composition obtained by inserting one or more elements into the composition \(p\) are
called the \emph{extensions} of \(p\).

Consider how to find an \emph{embedding}, of one \comp{n} inside another \comp{m}. In the context of
partitions, we can say that an embedding is an strictly order-preserving injection from
\(\nrange{n}\) to \(\nrange{m}\) such that two elements belong to the same part of the \comp{n} if
and only if their images belong to the same part of the \comp{n}. \cref{fig:compocc} shows an
embedding of the composition  \(122321\) in the composition \(13224211\).

\begin{figure}[htb]
\centering
    \begin{tikzpicture}[scale=0.5]
  \plotskylineshadedboxes{1/1,2/1,2/3,4/1,4/2,5/1,5/2,5/4,6/1,6/2,8/1}
	\plotskyline{1,3,2,2,4,2,1,1};
	\end{tikzpicture}
	\caption{An occurrence of \(122321\) in \(13224211\).}
	\label{fig:compocc}
\end{figure}
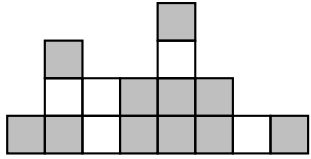

When we consider compositions as a sequence whose sum is \(n\) then a composition, \(u\), with
\(\ell\) parts has an embedding into a composition, \(v\), with \(k\) parts if we can find a subset
of \(\ell\) parts of the composition \(v\) such that each part of this subset is at least as large
as the corresponding part in \(u\). This is formalised in the following definition.

\begin{definition}\label{def:compcont}
  Given two compositions \(u=\seq{u} \et v=\seq[n]{v}\) (where \(k \le n\)) let \(\RL\) be the
  binary relation on the set of compositions such that \( u\RL v\) if there exists a set of \(k\)
  indices \(I=\setrange{i_1}{i_k}\) such that \(u_\ell \le v_{i_\ell}\) for all \(\ell \in
  \setrange{1}{k}\).
\end{definition}

An occurrence of a composition \(u\) with \(k\) parts in a composition \(v\) under the relation
\(\RL\) is a way of  selecting \(k\) parts of \(v\) and then from each of those parts \(i\) choosing
\(u_i\) elements from those parts.

When looking for an occurrence of a composition \(u =
\seq{u}\) inside a composition \(v\) we can match the first part of \(u_1\) into the leftmost
part of \(v_{i_1}\) of \(v\) with size greater than \(u_1\), we can then match \(u_2\) into the left most
sufficiently large part of \(v\) after \(v_{i_1}\) and so on until we have seen an occurrence of \(u\) (or run out of parts in \(v\)). This
\emph{greedy} approach allows us to check whether \(v\) contains \(u\) in time proportional to the
length of \(v\) in the worst case. The existence of a leftmost occurrence of \(u\) in \(v\), or of a rightmost
occurrence, or similar considerations based around the greedy algorithm is a critical part of many
of the proofs below.

The \emph{support} of an occurrence of \(u\) in \(v\) is the  set of indices that correspond to the
chosen parts, for example, the support of the occurrence indicated in  \cref{fig:compocc} is
\(\{1,2,4,5,6,8\}\).

\begin{definition}
  Given a composition \(u\) with \(k\) parts, and a set of indices  \(X \subseteq \setrange{1}{k}\),
  the \emph{subcomposition of  \(u\) at \(X\)}, denoted \(u_X\) is the composition formed by  taking
  the parts of \(u\) at the indices selected by \(X\).
\end{definition}

Given a composition \(u\) with \(k\) parts and a composition \(v\) with \(\ell\) parts  where \(k\le
\ell\), the set of \emph{supports of \(u\) in \(v\)}, denoted \(\supp(u,v)\), is the those
\(X\in\binom{\nrange{\ell}}{k}\) such that \(u \RL v_X\).

\begin{example}
  The supports of occurrences of \(u = 122321\) in \(v = 13224211\) can be found by noting that the
  part of size \(3\) in \(u\) must occur in the part of size \(4\) in \(v\). We can therefore take
  any \(3\) of the first \(4\) parts, the part of size \(4\) and the part of size \(2\) that follows
  it as well as either of the last two parts. The set of supports of \(u\) in \(v\) therefore has
  \(8\) elements.
\end{example}

Denote by \(\Occ(u,v)\) the number of occurrences of \(u\) inside
the compositions \(v\). For this relation, each part of size \(k\) in \(v\) that
matches a part of size \(\ell\) in \(u\) contributes a factor of \(\binom{\ell}{k}\)
to the number of occurrences. That is

\begin{equation}
  \Occ_\mathcal{L}(u,v) = \sum_{X\in \binom{\nrange{\ell}}{k}}
  \prod_{j\in \nrange{k}}\binom{v_{X_j}}{u_j} \label{eqn:occ}
\end{equation}

Note that if the set of indices chosen is not in the set of supports then the that particular term of the sum will be equal to \(0\).

If the union of the set of supports of occurrences of a composition \(u\) in a
composition \(v\) is equal to the set of all the indices of \(v\) then we say
that \(v\) is \emph{covered} by \(u\).

The reverse of a composition \(u\), \(\rev(u)\), is the \comp{|u|} that has the same parts as \(u\)
but the order of the parts are reversed.
\begin{rem}
  The number of occurrences of \(u\) in \(v\) is the same as the number of occurrences of
  \(\rev(u)\) in \(\rev(v)\). In fact \(\rev: \sC \to \sC\) is an automorphism of the collection of
  compositions as an ordered set. This symmetry will sometimes be implicitly used in our proofs.
\end{rem}

Given a composition \(u\) with \(k\) parts and an index \(i\) in \(\nrange{k}\) the \emph{prefix of
\(u\) up to \(i\)}  is the subcomposition of \(u\) at \(I=\setrange{1}{i-1}\). Similarly, the
\emph{suffix of \(u\) after \(i\)}  is the subcomposition of \(u\) at \(I=\setrange{i+1}{k}\). Note
that neither of the \emph{prefix of \(u\) up to \(i\)} or the \emph{suffix of \(u\) after \(i\)}
contain the \(i\)\nth{} part of the composition.

\section{Prolific Compositions}

Recalling the definition of being prolific in the context of compositions: a composition \(v\) of
size \(n\) is \emph{prolific} for a composition \(u\), or \emph{\(u\)-prolific}, if for all proper
extensions, \(v^\prime\) of \(v\), the number of occurrences of \(u\) in \(v^\prime\) is strictly
greater than the number of occurrences of \(u\) in \(v\). To determine whether \(v\) is
\(u\)-prolific it suffices to consider extensions, \(v^\prime\), of \(v\) with \(\setsize{v^\prime}
= \setsize{v} + 1\).  Denote the set of all \(u\)-prolific compositions as \(\Pro(u)\).

\begin{rem}
  A composition \(v\) is in \(\Pro(u)\) if and only if \(\rev(v)\) is in \(\Pro(\rev(u))\).
\end{rem}

\begin{theorem}\label{thm:startend}
  If \(u= \seq{u}\), then \(\Pro(u)\) is non-empty if and only if
  \(u_1 = u_k = 1\)
\end{theorem}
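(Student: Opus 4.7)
\emph{Necessity.} Suppose $u_1 > 1$; the case $u_k > 1$ is symmetric via the reversal automorphism. For any composition $v$, the extension $v' = 1 \cdot v$ obtained by prepending a new part of size $1$ satisfies $\Occ(u, v') = \Occ(u, v)$, since any occurrence of $u$ in $v'$ whose support contains the new (leftmost) part must match it to $u_1$, contributing the factor $\binom{1}{u_1} = 0$. Hence no $v$ is $u$-prolific, and $\Pro(u) = \emptyset$.

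\emph{Sufficiency.} Assume $u_1 = u_k = 1$, and take $v := uu$, the self-concatenation of $u$, so $v$ has $2k$ parts with $v_i = u_i$ for $i \le k$ and $v_i = u_{i-k}$ for $i > k$. Both $\{1,\ldots,k\}$ and $\{k+1,\ldots,2k\}$ are supports of natural occurrences of $u$ in $v$, and together they use every index of $v$. So for any extension that increases some part $v_i$ by $1$, that part lies in some support $X$ matching $u_j$ with $u_j = v_i$, and $X$ alone contributes an extra $\binom{v_i}{u_j - 1} \ge 1$ (times positive factors) to $\Occ(u, v')$, giving a strict increase.

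For the other kind of extension, which inserts a new part of size $1$ at some position $p \in \{1, \ldots, 2k+1\}$ of $v'$, I will exhibit an occurrence of $u$ in $v'$ that uses the new part, forcing $\Occ(u, v') > \Occ(u, v)$. If $p \le k+1$, match the new part to $u_1 = 1$ and match $u_2, \ldots, u_k$ to the parts of $v$ at positions $k+2, \ldots, 2k$ (which are literally $u_2, \ldots, u_k$, read from the second copy of $u$); since $p \le k+1$, all these positions are to the right of the new part in $v'$. Symmetrically, if $p \ge k+1$, match the new part to $u_k = 1$ and use $v$-positions $1, \ldots, k-1$ for $u_1, \ldots, u_{k-1}$. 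Since $\{1,\ldots,k+1\} \cup \{k+1,\ldots,2k+1\} = \{1,\ldots,2k+1\}$, every insertion position is handled, so $v \in \Pro(u)$.

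The main thing requiring care is the positional bookkeeping in the insertion case — especially that the two subcases overlap correctly at $p = k+1$, and that the matched parts of the ``spare'' copy of $u$ truly sit on the required side of the inserted part. Beyond this routine case analysis, no deeper obstacle is anticipated: the underlying idea, that self-concatenation supplies a full copy of $u$ on each side of any insertion, is robust and makes both types of extension immediate.
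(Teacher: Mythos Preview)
Your proof is correct and follows essentially the same approach as the paper: prepending a~1 handles necessity, and a doubled copy of~$u$ witnesses sufficiency by leaving a full copy of the pattern on whichever side of the inserted element is ``far.'' The only minor difference is that the paper uses the slightly smaller witness $v = 1\alpha\alpha 1$ (where $u = 1\alpha 1$) rather than your $v = uu$, but the mechanism---and the positional case split---is identical.
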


\begin{proof}
If \(u_1 \neq 1\) and \(v\) is any composition then \(1v\) contains no additional occurrences of
\(u\) compared to \(v\). Similarly if \(u_k \neq 1\) then \(v1\) contains no more occurrences of
\(u\) than \(v\) does. So, in both cases \(\Pro(u) = \emptyset\).

Conversely, if \(u=1\alpha1\) then \(v = 1\alpha\alpha{}1\) is prolific. If a new element is
inserted into \(v\) before the end of  the first \(\alpha\) then it creates a new occurrence of
\(u\) using the new element, the second \(\alpha\) and the final part. While if a new element is
inserted into \(v\) after the end of the first \(\alpha\) then it creates a new occurrence of \(u\)
using the first one, the first \(\alpha\), and the new element.
\end{proof}

\begin{lemma}\label{lem:oneocc}
  If a composition \(v\) is \(u\)-prolific, then it must contain at least one occurrence of \(u\).
\end{lemma}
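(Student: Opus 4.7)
The approach is contrapositive: assuming $v$ does not contain $u$, I would produce a proper extension $v'$ with $\Occ(u,v')=0$ as well, contradicting the definition of prolific. The lemma is vacuous when no composition is $u$-prolific, so I may assume $\Pro(u) \neq \emptyset$ and invoke \cref{thm:startend} to conclude $u_1 = u_k = 1$.

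My candidate extension is the $v'$ obtained from $v$ by incrementing the first part by one. The intuition is that since $u_1 = 1$, the first part of $v$ was already tall enough to host $u_1$, so the extra unit perched on top can never be required by any embedding of $u$; the only potentially new embeddings would be those that actually use the added square, and $u_1 = 1$ is what rules them out.

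To turn this into a proof, I would suppose an embedding of $u$ in $v'$ exists at support $i_1 < \cdots < i_k$ and split into two cases. Either $i_1 \geq 2$, in which case the embedding lies entirely in the unchanged suffix of $v'$ and is already an embedding in $v$; or $i_1 = 1$, in which case $v_1 \geq 1 = u_1$ means the same support $\{1, i_2, \ldots, i_k\}$ witnesses $u \RL v$ directly. Either case contradicts the assumption that $v$ avoids $u$, so $\Occ(u,v') = 0$ and $v$ is not $u$-prolific. There is no real obstacle here; the whole argument hinges on the condition $u_1 = 1$ supplied by \cref{thm:startend}, which is precisely what makes the first-part increment invisible to embeddings of $u$ (symmetrically one could increment the last part and use $u_k = 1$).
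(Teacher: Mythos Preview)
Your argument is correct and essentially identical to the paper's own proof: the paper increments the \emph{last} part of \(v\) and uses \(u_k = 1\), whereas you increment the \emph{first} part and use \(u_1 = 1\), which is exactly the reversal symmetry you yourself note at the end. Both proofs rely on \cref{thm:startend} in the same way and proceed by the same contrapositive case split on whether the modified part lies in the support of a putative occurrence.
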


\begin{proof}
  Suppose there were a composition \(v\) that was \(u\)-prolific but did not contain an occurrence
  of \(u\). Since the last part of \(u\) must be of size \(1\) then increasing the size of the last
  part of \(v\) would not create any occurrences of \(u\), so therefore \(v\) could not
  have been \(u\)-prolific. Any \(u\)-prolific composition must therefore contain at least one
  occurrence of \(u\).
\end{proof}

\begin{rem}
  Since a \(u\)-prolific composition must have at least one occurrence of \(u\) then it must have at
  least the same number of parts as \(u\).
\end{rem}

In fact the following lemma extends strengthens the statement made in \cref{lem:oneocc} considerably.
\begin{theorem}\label{thm:coveringprolific}
  If a composition \(v\) is \(u\)-prolific, then it must be covered by \(u\).
\end{theorem}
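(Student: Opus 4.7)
I will prove the contrapositive: if $v$ is not covered by $u$, then $v$ is not $u$-prolific. Fix an index $i$ of $v$ that does not lie in the support of any embedding of $u$ in $v$. The plan is to show that the proper extension $v'$ obtained from $v$ by inserting a new part of size $1$ immediately before $v_i$ satisfies $\Occ(u,v') = \Occ(u,v)$.

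Using formula~(\ref{eqn:occ}), the only embeddings of $u$ into $v'$ that are genuinely new are those that use the new part; since it has size $1$, it can only play the role of a part $u_m$ of $u$ with $u_m = 1$. A direct bookkeeping then factorises the difference as
\begin{equation*}
    \Occ(u,v') - \Occ(u,v) \;=\; \sum_{m\,:\,u_m=1} a_m \, b_m,
\end{equation*}
where $a_m$ counts the weighted embeddings of the prefix $u_1\dotsc u_{m-1}$ into $v_1\dotsc v_{i-1}$ and $b_m$ counts the weighted embeddings of the suffix $u_{m+1}\dotsc u_k$ into $v_i\dotsc v_\ell$. The whole argument reduces to showing each term vanishes.

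Let $(p_1,\dotsc,p_k)$ and $(r_1,\dotsc,r_k)$ be the greedy leftmost and rightmost embeddings of $u$ in $v$; a short induction shows that $p_l$ and $r_l$ are respectively the minimum and maximum of the $l$-th coordinate over all embeddings of $u$ in $v$, so in particular $p_l \le r_l$. Unpacking the definitions, the index $i$ lies in a support with $u_l$ matched to $v_i$ exactly when $p_{l-1} < i$, $r_{l+1} > i$ and $u_l \le v_i$. For any $m$ with $u_m = 1$ the third condition is automatic, so $i$ failing to lie in any support at position $m$ forces either $p_{m-1} \ge i$ (giving $a_m = 0$) or $r_{m+1} \le i$. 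In the latter case I must exclude the boundary possibility $r_{m+1} = i$: otherwise $u_{m+1} \le v_i$, and combined with $p_m \le r_m < r_{m+1} = i$ and $r_{m+2} > i$ this would place $i$ in a support at position $m+1$, contradicting the choice of $i$. Hence $r_{m+1} < i$, so the suffix does not embed entirely within positions $\ge i$ and $b_m = 0$.

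The main obstacle is precisely this exchange argument shifting the putative support from position $m$ to $m+1$; once it is in hand, every term of the sum vanishes, yielding $\Occ(u,v') = \Occ(u,v)$ and witnessing that $v$ is not $u$-prolific.
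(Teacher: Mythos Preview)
Your proof is correct and shares its core with the paper's: both insert a singleton part adjacent to the uncovered index and exploit the greedy extremal occurrences of \(u\) in \(v\). The paper argues by contradiction---assuming \(v\) is prolific, the inserted part must lie in a new occurrence, that occurrence is forced to use the uncovered part \(X\), and this in turn forces the greedy leftmost occurrence of \(u\) in \(v\) to use \(X\). You instead compute \(\Occ(u,v')-\Occ(u,v)\) directly, factoring it as \(\sum_{m:u_m=1} a_m b_m\) and killing each summand via the characterisation of supports through \(p_{m-1}\) and \(r_{m+1}\); your exchange step (shifting from position \(m\) to \(m+1\) when \(r_{m+1}=i\)) is exactly the counterpart of the paper's observation that the new occurrence must use \(X\). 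Your route is a little more explicit and brings in the rightmost greedy embedding symmetrically with the leftmost one, at the price of slightly more bookkeeping. Two small points to tidy up: you tacitly assume that \(u\) embeds in \(v\) when you introduce \(p\) and \(r\)---if it does not, \cref{lem:oneocc} already shows \(v\) is not \(u\)-prolific, so a one-line remark covers that case; and you should state the boundary conventions \(p_0=0\) and \(r_{k+1}=\ell+1\) so that the cases \(m=1\) and \(m=k\) (and the use of \(r_{m+2}\) when \(m=k-1\)) go through without special pleading.
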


\begin{proof}
  Suppose there existed a composition \(v\) than belonged to \(\Pro(u)\)
  but was not covered by copies of \(u\). Choose a part \(X\) of \(v\), with size \(x\),
  that does not occur in the support of any occurrence of \(u\). Write \(v\) as
  \(\alpha x\beta\).
  \begin{center}
    \begin{tikzpicture}[scale=0.5]
      \node[above left] at (0,0) {\(v = \dots\)};
      \plotskylineoutline{2,0,1,0,3,0,5,0,0,0,2,0,4}
      \node[above right] at (13,0){\(\dots\)};
      \node at (6.5,2.5) {\(X\)};
      \node[above] at (8.5,0){\(\dots\)};
      \draw [red, thick] (5.9,-0.1) rectangle ++(1.2,5.2);
      \draw [
    thick,
    decoration={
        brace,
        mirror,
        raise=5
    },
    decorate
] (-1,0) -- (5.5,0)
node [pos=0.5,anchor=north,yshift=-5] {\(\alpha\)};

      \draw [
    thick,
    decoration={
        brace,
        mirror,
        raise=5
    },
    decorate
] (7.5,0) -- (14,0)
node [pos=0.5,anchor=north,yshift=-5] {\(\beta\)};
    \end{tikzpicture}
  \end{center}

  Consider the extension of \(v\) formed by inserting a new part of size one immediately after
  \(X\). Since \(v\) is \(u\)-prolific this must create a new occurrence of \(u\). If the new part
  were to play the role of the first part of \(u\) in the new occurrence then there would have
  existed a number of occurrences of \(u\) that used the same parts as the new occurrence, but used
  an element of the part \(X\) instead of the new part, and \(X\) would be covered. Therefore, the
  new part must play the role of the last part, or some internal part, of \(u\). Furthermore in the
  new occurrence of \(u\) must use the part \(X\) as otherwise any element of \(X\) could play the
  role that the inserted element plays and \(X\) would be covered.

  Now, suppose that the maximum prefix of \(u\) in \(\alpha X\) were the same as that in \(\alpha\).
  Then whatever parts of \(u\) that were used in the occurrence could have been found entirely in
  \(\alpha\), contradicting the observation that \(X\) must be used. So the maximum prefix of \(u\)
  in \(\alpha X\) is strictly longer than that in \(\alpha\). Hence when greedily finding an
  occurrence of \(u\) in \(v\) the part \(X\) will be used and this contradicts with the assumption
  that \(X\) was not covered.

\end{proof}

\begin{corollary}
  In order to discover whether a composition \(v\) is \(u\)-prolific we need
  only check that:
  \begin{itemize}
    \item \(v\) is  covered by \(u\), and
    \item for all factorisations of \(v = \alpha xy\beta\) with \(x,y > 1\) there is an occurrence of
  \(u\) in \(\alpha x1y\beta\) that uses the inserted \(1\).
  \end{itemize}
\end{corollary}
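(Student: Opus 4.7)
My plan is to verify the claimed equivalence in both directions by a case analysis on the proper length-\((|v|+1)\) extensions of \(v\). The forward direction is essentially a rephrasing of what is already known: \cref{thm:coveringprolific} gives covering, and since \(\alpha x 1 y \beta\) is a proper extension of \(v = \alpha x y \beta\), prolificity forces at least one new occurrence, which by definition must use the inserted \(1\).

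For the reverse direction I would assume both bulleted conditions hold and show that every such extension strictly increases the occurrence count of \(u\). The extensions come in two families: (a) enlarging some part \(v_i = x\) to \(x+1\); and (b) inserting a new part of size one in one of the \(k+1\) gaps (before the first part, between two adjacent parts, or after the last part). Case (a) is handled by covering plus a routine binomial calculation: some support contains index \(i\), and in the formula \eqref{eqn:occ} that support's contribution grows from a product with factor \(\binom{x}{u_j}\) to one with factor \(\binom{x+1}{u_j}\) -- a strict increase, since \(u_j \le x\) -- while all other supports contribute unchanged.

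Case (b) splits into three sub-cases. For insertion before \(v\) (respectively after), \cref{lem:oneocc} supplies an occurrence of \(u\) in \(v\), and the inserted \(1\) playing the role of \(u_1 = 1\) (respectively \(u_k = 1\)) together with the tail (respectively prefix) of that occurrence gives a new occurrence of \(u\). For an interior insertion between adjacent parts \(x\) and \(y\) with \(x = 1\) or \(y = 1\), covering places that size-one neighbour into the support of some occurrence of \(u\) playing a role \(u_j = 1\); swapping that part for the inserted \(1\) (with the appropriate index shift) yields a new occurrence that uses the inserted element. The remaining sub-case \(x, y > 1\) is exactly the second bulleted hypothesis of the corollary.

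The main care needed is bookkeeping: in each insertion sub-case I must track how positional indices shift after the new part is inserted so that the constructed occurrence genuinely uses the inserted element and is therefore not already among the occurrences of \(u\) in \(v\). Apart from this index-shifting, every case reduces cleanly to covering, the elementary binomial identity, or the second bulleted hypothesis.
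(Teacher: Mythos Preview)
Your approach is essentially the paper's: covering handles part-enlargement via \eqref{eqn:occ}, insertion of a \(1\) adjacent to an existing size-\(1\) part is handled by substitution, and the second bullet covers insertion between two large parts. The paper's proof is in fact terser than yours and leaves the boundary insertions (before the first part or after the last) implicit, so your case split is if anything more complete.

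One small slip: in the reverse direction you cannot invoke \cref{lem:oneocc} to obtain an occurrence of \(u\) in \(v\), since that lemma assumes \(v\) is \(u\)-prolific, which is exactly what you are trying to establish. The fix is immediate: covering means the set of supports is nonempty, so an occurrence of \(u\) in \(v\) already exists. You also tacitly use \(u_1 = u_k = 1\) for the boundary insertions; this is not stated in the corollary but is harmless in context given \cref{thm:startend}.
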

\begin{proof}
  If \(v\) is covered then increasing any part increases at least one term of the sum in
  \cref{eqn:occ} and therefore increases the number of occurrences. Inserting any new singleton
  parts adjacent to an existing part of size \(1\) in \(v\) need not be considered since the new part
  can be substituted for the existing part in any occurrence that used the existing part. Therefore
  it suffices to consider inserting new parts of size \(1\) between parts of size strictly greater
  than \(1\).
\end{proof}

Now consider the problem of how to efficiently determine that \( v \) is prolific for \( u \) (assuming
that \( u \) starts and ends with 1). The following description will provide the basis for an automaton that
recognises this property. That is, the automaton will accept exactly those words over \( \mathbb{N} \) that are \( u \)-prolific.
Each letter of such a word corresponds to processing a part in a composition so the non-accepting states
can be thought of as encoding the conditions that must be satisfied by the remaining suffix of the word being
processed in order that it should be \( u \)-prolific. What are these suffix conditions?

First consider the condition that inserting a 1 after any prefix of \( v \) should create a new occurrence
of \(u \). The prefix \(v_1 v_2 \dots v_i\) contains some maximal prefix \(\alpha\) of \(u\). Then there is a maximal \(\beta\)
such that \(\alpha 1\) contains the prefix \(\beta 1\) of \(u\). This might be \(\alpha 1\) itself, but if \(\alpha\) is not
followed immediately by a \(1\) in \(u\) then \(\beta\) will be the maximal prefix of \(\alpha\) that is followed by a \(1\) in \(u\). Such
a \(\beta\) will always exist (though it may be empty) since \(u\) begins with a \(1\). Now write \(u = \beta 1 \gamma\). The requirement
that the new 1 after \(v_i\) creates a new occurrence of \(u\) is precisely the requirement that the remainder of \(v\) should contain
an occurrence of \(\gamma\). This is the local suffix requirement at $v_i$.

At the point where \(v_i\) is being considered there was already some existing suffix requirement. If \(v_i\) is greater than
or equal to the first character of that suffix then this requirement is reduced in length by one, otherwise it stays the
same. The final suffix requirement after considering \( v_i \)  is then the longer of the previous (possibly modified) one,
and the local suffix requirement at \( v_i \).

The initial suffix requirement (before we process any characters at all) is all of \( u \). This ensures that,
as soon as the suffix requirement becomes empty we will have constructed a word containing an occurrence of \( u \). So, when
the suffix requirement is non-empty we have either not found an occurrence of \( u \) or there remain positions where inserting
a 1 would not create a copy of \( u \). To that point, the word we have processed is certainly not \( u \)-prolific. But, as soon as the suffix
requirement is empty these two conditions are both satisfied and we will show below that all such \( v \) are \( u \)-prolific.

\begin{example}
Given the composition \(u = 1213221\) and \(v = 15512443221\) we can associate the \(u\)-prefix, \(u\)-suffix requirement pairs show in \cref{t:prefixsuffix}.

\begin{table}[hbt]
\centering
\caption{Maximal prefixes and suffix requirement pairs for \ensuremath{u = 1213221} and \ensuremath{v = 15512443221}} \label{t:prefixsuffix}
\begin{tabular}{c|c|c|c}
\(i\) & \(v_i\) & \(u\)-prefix & \(u\)-suffix \\\hline
1 & 1 & 1 & 213221 \\
2 & 5 & 12 & 13221 \\
3 & 5 & 121 & 3221 \\
4 & 1 & 121 & 3221 \\
5 & 2 & 121 & 3221 \\
6 & 4 & 1213 & 3221 \\
7 & 4 & 12132 & 3221 \\
8 & 3 & 121322 & 221 \\
9 & 2 & 1213221 & 21 \\
10 & 2 & 1213221 & 1\\
11 & 1 & 1213221 & \(\varepsilon\)\\
\end{tabular}
\end{table}
\end{example}

The preceding discussion proves:

\begin{theorem}\label{thm:prefixsuffix}
  Given a composition \(v\) and a composition \(u\),  \(v\) is \(u\)-prolific if and only if
  the suffix requirements for the last part of \(v\) are empty.
\end{theorem}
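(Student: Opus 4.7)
The plan is to prove both implications separately, taking as the target characterisation of prolificity the corollary to \cref{thm:coveringprolific}: $v$ is $u$-prolific precisely when $v$ is covered by $u$ and the insertion of a $1$ between any two parts of $v$ both greater than $1$ creates a new occurrence of $u$.

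For the forward direction I would argue the contrapositive. Suppose the final suffix requirement is non-empty. The update rule forces one of two regimes: either the global requirement was never reduced to $\varepsilon$, in which case the initial requirement $u$ itself was never fully matched and $v$ contains no occurrence of $u$, contradicting \cref{lem:oneocc}; or some local requirement $\gamma$ introduced at position $i$ was never discharged by $v_{i+1} \cdots v_n$. In the latter case I would write $u = \beta 1 \gamma$, consider the extension $v' = v_1 \cdots v_i\, 1\, v_{i+1} \cdots v_n$, and argue that any new occurrence of $u$ in $v'$ using the inserted $1$ would force $\gamma$ to embed into $v_{i+1} \cdots v_n$. Since by hypothesis this fails, no new occurrence of $u$ is created, so $v$ is not prolific.

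For the reverse direction, assume the final suffix requirement is $\varepsilon$. Unpacking the algorithm, this gives two facts: the global requirement reached $\varepsilon$, so $v$ contains an occurrence of $u$; and every local requirement was ultimately discharged, so inserting a $1$ immediately after any $v_i$ creates a new occurrence of $u$ (insertion at the two boundaries of $v$ works directly because $v$ already contains $u$, so the inserted $1$ can play the role of $u_1$ or $u_k$). In particular the hypothesis of the corollary concerning $1$-insertions is satisfied. To conclude via the corollary I still need coverage, which I would obtain by invoking the argument in the proof of \cref{thm:coveringprolific}: that proof effectively uses only that inserting a $1$ adjacent to a chosen part $X$ produces a new occurrence of $u$, and deduces that $X$ belongs to some support. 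Applying this position by position yields that $v$ is covered by $u$, and the corollary then delivers prolificity.

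The main technical obstacle lies in the contrapositive of the forward direction: because the suffix requirement is a running maximum of modified previous requirements and newly introduced local requirements, it takes a careful extraction to pinpoint the exact position $i$ whose local requirement $\gamma$ survives to the end, and to phrase ``$\gamma$ was never discharged'' in a form that translates cleanly into the combinatorial failure of the corresponding $1$-insertion.
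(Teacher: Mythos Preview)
Your proposal is correct and follows essentially the same route as the paper: the forward direction is the dichotomy the paper sketches in the discussion preceding the theorem (non-empty final suffix means either no occurrence of \(u\) or a bad \(1\)-insertion), and the reverse direction establishes coverage from the \(1\)-insertion property and then appeals to the corollary. The only cosmetic difference is that the paper reproves the coverage argument inline via the same two-case analysis on the maximal prefix in \(\alpha X\) versus \(\alpha\), whereas you cite the proof of \cref{thm:coveringprolific}; your identification of the ``technical obstacle'' in disentangling the running maximum into a surviving global or local requirement is a point the paper leaves entirely implicit.
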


\begin{proof}
We have already argued that if the suffix requirement is non-empty then \( v \) is not \( u \)-prolific.

Suppose that \( v \) finishes with an empty suffix requirement. Such a \( v \) contains occurrences of \(u \)
and the insertion of a 1 at any point creates new occurrences of \( u \). It remains only to show that \( v \)
is covered by \( u \).

Let \( X \) be any part of \( v \) and write \( v = \alpha X \beta \). We consider two cases: the maximal prefix
of \( u \) in \( \alpha X \) is the same as that in \( \alpha \), or it is longer. Because the suffix condition is eventually
empty we know that there is a new occurrence of \( u \) in \( \alpha X 1 \beta \) which necessarily uses the 1
and corresponds to a factorisation \( u = \gamma 1 \tau \) where \( \gamma \) occurs in \( \alpha X \).

In the first case, \( \gamma \) also occurs in \( \alpha \) and so we could use any element of \( X \) in place of the 1, i.e.,
that part is covered.

But in the second case we already know that \( X \) is covered. For the fact that the maximal prefix of
\( u \) in \( \alpha X \) is longer than that in \( \alpha \) implies that \( X \) will be part of the leftmost occurrence of
\( u \) in \( v \) (and we know such an occurrence exists).

Thus every part of \( v \) is covered and we can conclude that \( v \) is \( u \)-prolific.

\end{proof}

As already foreshadowed, \cref{thm:prefixsuffix} allows the construction an automaton that recognises \(u\)-prolific
compositions. In this automaton the states are given by pairs \((p,s)\), where \(p\) represents the
length of the prefix seen and \(s\) is the length of suffix requirement needed when inserting a new
element after the current partial composition. There exists a transition on an interval \(I=[a,b]\)
between states \((p,s)\) and \((p^\prime, s^\prime)\) if appending any value in the interval \(I\)
to a word seen at \((p,s)\) causes the prefix of \(u\) seen to become \(p^\prime\) symbols long and
causes the length of the suffix requirement to become \(s^\prime\). If a state is unreachable we
omit it from the final automaton. Note that \(p^\prime = p \) or \(p+1 \et s^\prime = s\) or
\(s-1\).

\begin{example}
  The automaton for recognising \(1441\)-prolific compositions.
  \begin{center}
\begin{tikzpicture}[scale=0.75, every node/.style={scale=0.75}]
    \node[state, initial] (q01) {\((0,3)\)};
    \node[state, right of=q01] (q11) {\((1,3)\)};
    \node[state, right of=q11] (q21) {\((2,3)\)};
    \node[state,  right of=q21] (q32) {\((3,2)\)};
    \node[state, below right of=q32] (q42) {\((4,2)\)};
    \node[state, above right of=q42] (q43) {\((4,1)\)};
    \node[state, accepting , right of=q43] (q44) {\((4,0)\)};

    \draw [->] (q01)  edge[above] node{\([1,\infty]\)} (q11)
          (q11) edge[loop below] node{\([1,3]\)} (q11)
          (q11) edge[above] node{\([4,\infty]\)} (q21)
          (q21) edge[loop below] node{\([1,3]\)} (q21)
          (q21) edge[above] node{\([4,\infty]\)} (q32)
          (q32) edge[below left] node{\([1,3]\)} (q42)
          (q32) edge[above] node{\([4,\infty]\)} (q43)
          (q42) edge[below right] node{\([4,\infty]\)} (q43)
          (q42) edge[loop below] node{\([1,3]\)} (q42)
          (q43) edge[above] node{\([1,\infty]\)} (q44)
          (q44) edge[loop right] node{\([1,\infty]\)} (q44)
          ;

\end{tikzpicture}
  \end{center}

\end{example}
Observing that transitions in the automaton are caused by intervals whose endpoints are associated to values in the composition leads us to the following definition.

\begin{definition}
  For any composition \(u\), an interval \([a, b-1]\) is \emph{critical} if one of
  the following holds:
  \begin{itemize}
  \item
  \( a = 1 \) and \( b \) is the minimum value of a non-1 symbol in \( u \),
  \item
  \( a \) and \( b \) are values of symbols in \( u \) and \( b \) is the least such value greater than \( a \), or
  \item
  \( a \) is the maximum value in \( u \), and \( b = \infty \).
  \end{itemize}
\end{definition}

\begin{example}
  The critical intervals of \(u = 373499\) are
  \begin{equation*}
    \left\{
      [1,2],[3,3],[4,6],[7,8],[9,\infty]
    \right\}
  \end{equation*}
\end{example}

In the automaton describing \( u \)-prolific permutations each transition is labeled by unions of critical intervals. This suggests that there
is a notion of standardisation relative to \( u \).

\begin{definition}
Given a composition, \(u\), we can define the \emph{ the \(u\)-standardisation
of \(w\)} as the function that takes each part of \(w\) and maps it to the
ordinal value of the critical interval \(u\) that contains the size of that
part.
\end{definition}
\begin{example}
  Suppose that \(u = 373499\) then the \(u\) standardisation of \(w= 8(12)4663281\)  is
  \begin{equation*}
    \std_u(w) = 453332141
  \end{equation*}
\end{example}

Note that the \(u\)-standardisation of \(u\) is supported by the set \(\nrange{m}\), or
\(\nrange{m+1} \setminus \{1\}\), where \(m\) is the number of distinct values of \(u\), and the latter occurs if
the symbol \(1\) does not occur in \(u\). It is the unique composition of this support such that the
order relations between corresponding elements are the same as in \(u\). This order preservation
property is similar to that of the notion of standardisation in permutations.

\begin{theorem}\label{thm:stdzation}
  For any composition \(u\)
  \begin{equation*}
    w \in \Pro(u) \iff \std_u(w) \in \Pro(\std_u(u))
  \end{equation*}
\end{theorem}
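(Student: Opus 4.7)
My plan is to use the two-part characterisation of $u$-prolificity from the Corollary following Theorem~\ref{thm:coveringprolific}: namely that $v$ is $u$-prolific iff $v$ is covered by $u$, and for every factorisation $v = \alpha x y \beta$ with $x, y > 1$, the extension $\alpha x 1 y \beta$ contains an occurrence of $u$ using the inserted $1$. Both conditions are purely combinatorial statements about containment of $u$ in subcompositions of $v$, and the strategy is to show that $\std_u$ preserves the entire support structure.

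The first step is to prove the order-preservation lemma: for any composition $w$, any index $i$, and any $j \in \{1, \dots, k\}$, we have $u_j \leq w_i$ if and only if $\std_u(u)_j \leq \std_u(w)_i$. This follows directly from the definition of the critical intervals, since the left endpoints of the critical intervals of $u$ are precisely the distinct values of $u$ (together with $1$), so the ordinal of the critical interval containing a value is monotone in that value, and each value of $u$ is itself the left endpoint of its critical interval. Combining this with Definition~\ref{def:compcont} immediately gives $\supp(u,v) = \supp(\std_u(u),\std_u(v))$ for every composition $v$; hence $v$ is covered by $u$ if and only if $\std_u(v)$ is covered by $\std_u(u)$.

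For the second clause of the criterion, one needs that insertions of a part of size $1$ correspond on both sides. Provided $u$ contains the value $1$, the first critical interval of $u$ is of the form $[1, b-1]$ and has ordinal $1$, so $\std_u(1) = 1$. Moreover, any part $x > 1$ of $v$ lies strictly above the first critical interval, so $\std_u(x) > 1$. Hence the standardisation of $\alpha x 1 y \beta$ is $\std_u(\alpha)\, \std_u(x)\, 1\, \std_u(y)\, \std_u(\beta)$, a factorisation of the required form, and by the support preservation of the previous step an occurrence of $u$ in $\alpha x 1 y \beta$ using the inserted $1$ transfers to an occurrence of $\std_u(u)$ in the standardised extension using the inserted $1$, and conversely.

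The main subtlety I anticipate, and the source of the only non-trivial case distinction, is the situation where $u$ does not contain the value $1$ (equivalently, $u_1 \neq 1$ or $u_k \neq 1$). Here Theorem~\ref{thm:startend} gives $\Pro(u) = \emptyset$, and one must verify $\Pro(\std_u(u)) = \emptyset$ as well; this holds because $u_1 \neq 1$ forces $\std_u(u)_1 \geq 2$, since the first critical interval contains only values strictly smaller than the smallest non-$1$ symbol of $u$, and $u_1$ is already such a symbol (and symmetrically for $u_k$). Thus both sides are empty, and the equivalence is vacuous in this case.
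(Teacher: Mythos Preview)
Your approach is different from the paper's: you argue directly via the two-clause characterisation in the Corollary after Theorem~\ref{thm:coveringprolific}, whereas the paper observes that the automata recognising \(\Pro(u)\) and \(\Pro(\std_u(u))\) are identical up to relabelling of the transition intervals (since two values in the same critical interval induce the same transition everywhere). Your route avoids the automaton machinery and is more self-contained, at the cost of having to verify both clauses of the corollary by hand.

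There is, however, a genuine gap. You assert that ``any part \(x > 1\) of \(v\) lies strictly above the first critical interval, so \(\std_u(x) > 1\)''. This is false: if the smallest non-\(1\) value appearing in \(u\) is \(b\), the first critical interval is \([1, b-1]\), and any \(x\) with \(1 < x < b\) has \(\std_u(x) = 1\). Consequently a factorisation \(v = \alpha x y \beta\) with \(x, y > 1\) need not standardise to a factorisation of \(\std_u(v)\) with both neighbours exceeding \(1\), so the second clause of the corollary for \(\std_u(v)\) does not transfer directly back to \(v\). The backward implication is therefore not established as written.

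The repair is short and mirrors the reasoning already used in the proof of the Corollary. If \(\std_u(x) = 1\) then \(x < b\); since \(v\) is covered by \(u\) (which you have from the first clause), there is an occurrence of \(u\) in \(v\) whose support includes the part \(x\), and because every non-\(1\) part of \(u\) is at least \(b > x\), that part of \(u\) must be a \(1\). The inserted \(1\) can then substitute for \(x\) in this occurrence. The case \(\std_u(y) = 1\) is symmetric. With this patch your argument goes through. (A minor further quibble: your parenthetical ``equivalently, \(u_1 \neq 1\) or \(u_k \neq 1\)'' is not literally equivalent to ``\(u\) does not contain \(1\)''; but the case split you actually use --- \(u_1 = u_k = 1\) versus not --- is the correct one, and your handling of the degenerate case is fine.)
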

\begin{proof}
If two different values belong to the same critical interval for \( u \) then they can be exchanged for one another in any word \( v \) without
changing the supports of any occurrences of \( u \). In particular, they must induce the same transition from any state of the automaton
that recognises \( \Pro(u) \). But even more, all that matters is which critical interval they belong to in order from smallest to largest -- so
we can collapse each critical interval (except the unbounded one) to a single value which is the corresponding value in \( \std_u (u) \).
That is, except for this  relabelling, the two automata for \( \Pro(u) \) and \( \Pro( \std_u(u) ) \) are identical from which the result follows.

\end{proof}

\begin{example}
  Equivalence between the automaton for recognising \(1441\)-prolific compositions, and that recognising \(1221\)-prolific compositions (\(1221\) is the \(1441\)-standardisation of \(1441\)).
  \begin{center}
\begin{tikzpicture}[scale=0.75, every node/.style={scale=0.75}]
    \node[state, initial] (q01) {\((0,3)\)};
    \node[state, right of=q01] (q11) {\((1,3)\)};
    \node[state, right of=q11] (q21) {\((2,3)\)};
    \node[state,  right of=q21] (q32) {\((3,2)\)};
    \node[state, below right of=q32] (q42) {\((4,2)\)};
    \node[state, above right of=q42] (q43) {\((4,1)\)};
    \node[state, accepting , right of=q43] (q44) {\((4,0)\)};

    \draw [->] (q01)  edge[above] node{\([1,\infty]\)} (q11)
          (q11) edge[loop below] node{\([1,3]\)} (q11)
          (q11) edge[above] node{\([4,\infty]\)} (q21)
          (q21) edge[loop below] node{\([1,3]\)} (q21)
          (q21) edge[above] node{\([4,\infty]\)} (q32)
          (q32) edge[below left] node{\([1,3]\)} (q42)
          (q32) edge[above] node{\([4,\infty]\)} (q43)
          (q42) edge[below right] node{\([4,\infty]\)} (q43)
          (q42) edge[loop below] node{\([1,3]\)} (q42)
          (q43) edge[above] node{\([1,\infty]\)} (q44)
          (q44) edge[loop right] node{\([1,\infty]\)} (q44)
          ;

\end{tikzpicture}
  \end{center}

    \begin{center}
\begin{tikzpicture}[scale=0.75, every node/.style={scale=0.75}]
    \node[state, initial] (q01) {\((0,3)\)};
    \node[state, right of=q01] (q11) {\((1,3)\)};
    \node[state, right of=q11] (q21) {\((2,3)\)};
    \node[state,  right of=q21] (q32) {\((3,2)\)};
    \node[state, below right of=q32] (q42) {\((4,2)\)};
    \node[state, above right of=q42] (q43) {\((4,1)\)};
    \node[state, accepting , right of=q43] (q44) {\((4,0)\)};

    \draw [->] (q01)  edge[above] node{\([1,\infty]\)} (q11)
          (q11) edge[loop below] node{\([1,1]\)} (q11)
          (q11) edge[above] node{\([2,\infty]\)} (q21)
          (q21) edge[loop below] node{\([1,1]\)} (q21)
          (q21) edge[above] node{\([2,\infty]\)} (q32)
          (q32) edge[below left] node{\([1,1]\)} (q42)
          (q32) edge[above] node{\([2,\infty]\)} (q43)
          (q42) edge[below right] node{\([2,\infty]\)} (q43)
          (q42) edge[loop below] node{\([1,1]\)} (q42)
          (q43) edge[above] node{\([1,\infty]\)} (q44)
          (q44) edge[loop right] node{\([1,\infty]\)} (q44)
          ;

\end{tikzpicture}
  \end{center}

\end{example}

\section{Minimal Prolific Compositions}\label{sec:MinProl}

The set of \(u\)-prolific compositions is closed upwards so it makes sense to try to determine its minimal elements. By a theorem of
\textcite{higmanOrderingDivisibilityAbstract1952}, the partial order of compositions is a quasi-well
order. Therefore the set of minimal \( u \)-prolific permutations will always be finite. In this section we
give a variety of results about these compositions.

\begin{theorem}
\label{thm:prepend-one}
  If \(\gamma\) is a minimally prolific composition for \(u \), then \( 1\gamma \) is minimally prolific for \(1 u \).
\end{theorem}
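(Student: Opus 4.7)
The plan is to establish a single key equivalence---a composition $w$ is $1u$-prolific if and only if the composition obtained from $w$ by deleting its first part (which I denote $w'$) is $u$-prolific, with the convention that the empty composition is not $u$-prolific---and then apply it to both halves of the theorem.

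Given the equivalence, part (a), that $1\gamma$ is $1u$-prolific, is immediate: take $w = 1\gamma$, so $w' = \gamma$, which is $u$-prolific by hypothesis. For minimality, by the upward-closure of prolificity it suffices to verify that every one-element-smaller subcomposition of $1\gamma$ fails to be $1u$-prolific. Such a subcomposition is either $\gamma$ itself (obtained by removing the leading $1$) or $1\gamma^{*}$ where $\gamma^{*}$ is a one-element-smaller subcomposition of $\gamma$ (obtained by reducing or deleting one of its parts). For the first, the equivalence reduces $1u$-prolificity of $\gamma$ to $u$-prolificity of $\gamma'$, a proper subcomposition of $\gamma$, which fails by minimality of $\gamma$. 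For the second, $\gamma^{*}$ is not $u$-prolific by minimality, so by the equivalence $1\gamma^{*}$ is not $1u$-prolific.

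To prove the equivalence I would exploit the automaton construction following \cref{thm:prefixsuffix}. From its initial state $(0, k+1)$ the $1u$-automaton transitions on any positive integer to the state $(1, k)$. Starting from $(1, k)$, I claim the $1u$-automaton is isomorphic to the $u$-automaton starting from $(0, k)$ under the state bijection $(p, s) \mapsto (p+1, s)$. Showing that this bijection commutes with transitions on any symbol reduces to three routine checks---prefix-match update, suffix-requirement shortening, and local suffix length---all of which follow from $(1u)_{j+1} = u_j$ for $j \geq 1$ together with the identity that the quantity $\max\{\ell \leq p : u_{\ell+1} = 1\}$ (the governor of the local suffix length) shifts by exactly one when we pass from $u$ to $1u$ (using $(1u)_1 = (1u)_2 = 1$). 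Under the bijection, the accepting state $(k, 0)$ of the $u$-automaton maps to the accepting state $(k+1, 0)$ of the $1u$-automaton, completing the equivalence. The main obstacle is this verification of the local suffix component, which depends on the internal 1-structure of the pattern; the other components are essentially formal.
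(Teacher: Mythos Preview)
Your proposal is correct and follows essentially the same approach as the paper. Both arguments rest on the observation that the \(1u\)-automaton, after its first transition (which any character triggers), is exactly the \(u\)-automaton; the paper states this directly and reads off the minimal accepted words, while you package the same fact as the equivalence ``\(w\) is \(1u\)-prolific iff \(w'\) is \(u\)-prolific'' and then carry out the minimality case analysis explicitly.
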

\begin{proof}

Consider the automaton that recognises prolific words for \( 1 u \). The initial suffix condition is \( 1 u \). \emph{Any} character
now changes the suffix condition to \( u \) and extends the prefix seen. But then the remainder of the automaton is exactly the automaton for recognising prolific words
for \( u \). So the minimum words accepted by this automaton consist of the minimum possible character to trigger the first transition
followed by a minimal word of the \( u \)-prolific automaton, exactly as claimed.
\end{proof}

\begin{corollary}
  If \(u\) is a minimally prolific composition for \(v\), then the composition
  \(\underbrace{1\dots1}_{n\text{ times}}u\) is minimally prolific for
  \(\underbrace{1\dots1}_{n\text{ times}}v\).
\end{corollary}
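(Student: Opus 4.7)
The plan is a direct induction on \(n\), using Theorem~\ref{thm:prepend-one} as the inductive step. Note that the variables are named differently in the corollary from in the theorem: here \(u\) is the prolific composition and \(v\) is the pattern, whereas in the theorem \(\gamma\) is the prolific composition and \(u\) is the pattern. Once the substitution is made, the $n=1$ case is simply a direct restatement of Theorem~\ref{thm:prepend-one}, and the $n=0$ case is vacuous.

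For the inductive step, suppose the result holds for some \(n \geq 1\), so that \(\underbrace{1\dots 1}_{n}u\) is minimally prolific for \(\underbrace{1\dots 1}_{n}v\). Now apply Theorem~\ref{thm:prepend-one} with the role of \(\gamma\) played by \(\underbrace{1\dots 1}_{n}u\) and the role of the pattern \(u\) played by \(\underbrace{1\dots 1}_{n}v\). The theorem yields that prepending a \(1\) to both gives a minimally prolific composition \(\underbrace{1\dots 1}_{n+1}u\) for the pattern \(\underbrace{1\dots 1}_{n+1}v\), which is exactly what is needed to close the induction.

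The only verification required is that the conclusion of Theorem~\ref{thm:prepend-one} is of the same form as its hypothesis, so that the theorem can be iterated. This is immediate: the conclusion \enquote{\(1\gamma\) is minimally prolific for \(1u\)} is again a statement of the form \enquote{(composition) is minimally prolific for (pattern)}, and the pattern \(1u\) still begins with a \(1\), so the theorem may be reapplied. There is no real obstacle; the result is simply the \(n\)-fold iterate of Theorem~\ref{thm:prepend-one}.
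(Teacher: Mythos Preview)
Your proposal is correct and matches the paper's intent: the corollary is stated without proof in the paper precisely because it is the immediate \(n\)-fold iterate of Theorem~\ref{thm:prepend-one}, which is exactly the induction you carry out. One small quibble: the \(n=0\) case is not vacuous but rather trivially the hypothesis itself; either way the induction goes through.
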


\begin{lemma}
  Given a composition \(u\), the unique minimal \(u\)-prolific composition is \(u\) itself if and only if \(u\) has a part of size one between every pair of parts of size greater than one.
\end{lemma}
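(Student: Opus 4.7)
The plan is to prove the biconditional in two halves: the ``if'' direction will follow from \cref{lem:oneocc} once we verify that $u$ itself is $u$-prolific under the hypothesis, while the ``only if'' direction proceeds by contrapositive via a single insertion that exhibits a non-prolific witness.

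For the ``if'' direction, assume that between every two parts of $u$ of size greater than one there is a part of size one. To show $u\in\Pro(u)$ I would check each possible extension of $u$. Enlarging any existing part raises the factor $\binom{u_j+1}{u_j}$ at the identity support, creating new occurrences. Inserting a 1 at the start (respectively end) of $u$ yields a new occurrence in which the inserted symbol plays the role of $u_1$ (respectively $u_{|u|}$) and the remainder of $u$ is matched along the identity shifted by one. Inserting a 1 between adjacent parts $u_i$ and $u_{i+1}$ is where the hypothesis enters: the separation condition guarantees at least one of $u_i,u_{i+1}$ equals 1, so the inserted symbol can serve as that position, giving a support distinct from the identity. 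Hence $u\in\Pro(u)$, and then \cref{lem:oneocc} forces $u \preceq v$ for every $v\in\Pro(u)$, so $u$ is the unique minimum.

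For the ``only if'' direction I would argue contrapositively. Suppose $u$ has two adjacent parts $u_j = x > 1$ and $u_{j+1} = y > 1$, and let $u'$ denote $u$ with a 1 inserted between these two positions. Any occurrence of $u$ in $u'$ that uses the inserted 1 must assign it to some role $u_i$ with $u_i = 1$; but then $u_1,\dots,u_{i-1}$ has to embed into the $j$ parts of $u'$ to the left of the new 1 and $u_{i+1},\dots,u_{|u|}$ into the $|u|-j$ parts to its right, which forces $i \in \{j,j+1\}$. Either choice contradicts $u_j,u_{j+1} > 1$, so $\Occ(u,u') = \Occ(u,u)$, meaning $u$ is not $u$-prolific and therefore cannot be the unique minimum of $\Pro(u)$.

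The main obstacle I expect is locating this contrapositive insertion: the idea of slipping a 1 into the unique ``big-big'' adjacency is the non-obvious move, but once made, the pigeonhole on $j$ and $|u|-j$ available parts is essentially automatic. The case analysis in the ``if'' direction, though longer on paper, is entirely routine once one observes that the separation hypothesis is exactly what one needs to find a 1 at every potential between-part insertion point.
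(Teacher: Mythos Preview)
Your proposal is correct and follows essentially the same route as the paper. For the ``if'' direction the paper invokes the Corollary to \cref{thm:coveringprolific} (the separation hypothesis makes its second condition vacuous) and then appeals to covering for minimality, whereas you check the extensions directly and cite \cref{lem:oneocc}; these are the same argument unpacked at different levels of detail. For the ``only if'' direction the paper simply asserts that inserting a $1$ between the two large adjacent parts creates no new copy of $u$, and your pigeonhole on the $j$ left parts and $|u|-j$ right parts is exactly the justification the paper leaves implicit.
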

\begin{proof}
  Under the given conditions on \(u\) in the Corollary to \cref{thm:coveringprolific} the second
  condition is vacuous. Since \(u\) is covered by itself and no smaller composition can be covered
  by copies of \(u\) the composition \(u\) is minimally prolific for itself.

  Suppose that \(u\) has a pair of parts with no one between them, then write  \(u\) as \(\alpha x y \beta\) with \(x, y >1\) then clearly \(\alpha x 1 y \beta\) contains no new copies of \(u\).
\end{proof}

If \( u \) is \( u \)-prolific then we call it a \emph{self-prolific} composition.

\begin{theorem}\label{thm:minconst}
    Given a composition of \(k+2\) parts, \(u = 1e_1e_2\ldots{}e_{k-1}e_k1\) such that for all \(i
    >0, e_i > 1\). There is a unique minimal \(u\)-prolific composition \(p\) given by: \[p =
    1e_1e_2\ldots{}e_{k-1}\max(e_k,e_1)\,e_2\ldots{}e_{k-1}e_k1\]
\end{theorem}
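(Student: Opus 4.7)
The plan splits the proof into two parts: verifying $p$ is $u$-prolific, then showing no strictly smaller composition is.

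For prolificity I apply the corollary to \cref{thm:coveringprolific}. I will exhibit two occurrences of $u$ in $p$: a \emph{left} occurrence on positions $1, 2, \ldots, k+1, 2k+1$ of $p$ with the central part $M = \max(e_1, e_k)$ playing the role of $e_k$, and a \emph{right} occurrence on positions $1, k+1, k+2, \ldots, 2k+1$ with $M$ playing $e_1$. Both embeddings are valid because $M \geq e_k$ and $M \geq e_1$, and together their supports exhaust $\{1, \ldots, 2k+1\}$, so $p$ is covered. For the singleton-insertion condition, every pair of consecutive parts of $p$ both exceeding 1 occurs at positions $(i, i+1)$ for some $2 \leq i \leq 2k-1$; I split at $i = k+1$. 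For $i \leq k$, the inserted 1 will serve as the first 1 of a new right-style occurrence completed by positions $k+1, \ldots, 2k+1$ of $p$. For $i \geq k+1$, the inserted 1 will serve as the final 1 of a new left-style occurrence started by positions $1, \ldots, k+1$ of $p$. In each case the new occurrence genuinely uses the inserted element, so the occurrence count strictly increases.

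For minimality, suppose $v \preceq p$ is $u$-prolific; I aim to show $p \preceq v$, which together with the antisymmetry of $\preceq$ yields $v = p$. By \cref{thm:coveringprolific} the composition $v$ is covered by $u$, and by \cref{thm:startend} its first and last parts are 1. Let $L = (\ell_1, \ldots, \ell_{k+2})$ and $R = (r_1, \ldots, r_{k+2})$ denote the leftmost and rightmost occurrences of $u$ in $v$. Greedy matching gives $\ell_{k+2} = \ell_{k+1} + 1$ and $r_1 = r_2 - 1$. Since $u$ has consecutive non-1 parts and hence (by the preceding lemma) is not self-prolific, applying the insertion condition at each pair of consecutive parts of $v$ both exceeding 1 forces additional parts of $v$ beyond those of $L$ alone, and symmetrically beyond those of $R$ alone. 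Pushing this analysis through, the positions $\ell_1, \ldots, \ell_{k+1}$ realise the prefix $1, e_1, \ldots, e_{k-1}$ of $p$, the positions $r_2, \ldots, r_{k+2}$ realise the suffix $e_2, \ldots, e_{k-1}, e_k, 1$, and the absence of slack parts (ensured by covering) forces $\ell_{k+1} = r_2$. This common ``pivot'' part simultaneously plays $e_k$ for $L$ and $e_1$ for $R$, so its size is at least $\max(e_1, e_k) = M$; stitching together yields an embedding of $p$ into $v$.

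The main obstacle will be rigorously justifying the pivot step: namely, showing that the role of $e_k$ in the leftmost occurrence of $u$ in $v$ and the role of $e_1$ in the rightmost occurrence must be played by a single shared part of $v$, whose size is consequently forced to be at least $\max(e_1, e_k)$. This is the point where the covering condition (eliminating parts used by neither occurrence) and the insertion condition (forcing the existence of the pivot in the first place) combine in a delicate way, and where the hypothesis that all interior parts of $u$ exceed 1 is essential.
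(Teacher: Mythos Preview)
Your prolificity argument (part one) is essentially the paper's: exhibit the left and right occurrences of \(u\) in \(p\), observe they cover, and use each to catch the inserted \(1\) depending on which side of the centre it lands. That part is fine.

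The minimality half has a structural gap. You write ``suppose \(v \preceq p\) is \(u\)-prolific; I aim to show \(p \preceq v\).'' Even if that implication held, it would only show that \(p\) is \emph{a} minimal element of \(\Pro(u)\): nothing strictly below \(p\) is prolific. It does not rule out a second minimal prolific composition \(w\) incomparable to \(p\). For \emph{unique} minimality you must show that \emph{every} \(u\)-prolific \(v\) (with no a priori relation to \(p\)) satisfies \(p \preceq v\). The paper does exactly this: it takes an arbitrary \(v\) avoiding \(p\), extends it (which preserves prolificity) so that \(v\) contains \(1W\) but not \(1W1 = p\), and then exhibits a specific \(1\)-insertion that creates no new copy of \(u\).

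Your pivot argument, even under the weaker hypothesis \(v \preceq p\), is not yet a proof. The claim that covering ``forces \(\ell_{k+1} = r_2\)'' is the crux and is not justified: covering says every part of \(v\) lies in the support of \emph{some} occurrence of \(u\), not necessarily the leftmost or rightmost one, so there is no immediate reason the two greedy occurrences must share a part. (There is also an indexing slip: \(\ell_1,\dots,\ell_{k+1}\) are \(k+1\) positions realising \(1,e_1,\dots,e_k\), not the \(k\)-term prefix \(1,e_1,\dots,e_{k-1}\).) The paper sidesteps this entirely: rather than locating a shared pivot inside \(v\), it uses the greedy match of \(1W\) into \(v\) to pin down where the parts \(x\) (matching \(e_{k-1}\)) and \(y\) (matching \(\max(e_1,e_k)\)) sit, and then inserts a \(1\) immediately before \(y\) (or after \(x\), in the other case) and argues directly that neither a left-hand nor right-hand role for that \(1\) can complete an occurrence of \(u\). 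That is the missing idea.
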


\begin{proof}
\noindent\emph{\(p\) is \(u\)-prolific.}

The composition \(p\) is covered by \(u\) since the first  \((k+1)\) parts form an occurrence of
\(u\) with any other part of \(p\).

 Adding a new part of size \(1\) anywhere before the \((k+1)\)\nst{} part creates a new occurrence of \(u\)
 with the last \((k+1)\) parts, and adding a new part of size \(1\) anywhere after the \((k+1)\)\nst{} part
 creates a new occurrence of \(u\) with the first \((k+1)\) parts. Therefore \(p\) is
 \(u\)-prolific.

\noindent\emph{No composition contained in \(p\) is \(u\)-prolific.}

Without loss of generality (due to symmetry under reversal) we can consider reducing the size of one
of the first \(k+1\) parts.

We have two cases:
\begin{case}[\(e_k \ge e_1\)] \label{case:c1}
  Consider inserting a new part of size 1 after the middle term. To the right of this part there are
  only \(k\) parts so we cannot find an occurrence of \(u\) in which the new part plays the role of
  the leftmost part of size 1. In order that we should have a new occurrence of \( u \) to the left of
  the new part would require each such part to match into the corresponding part of \( u \) since
  there are exactly enough parts available. But one of them is smaller than the corresponding part
  of \( u \) so this cannot happen. Therefore, no new occurrence of \( u \) is created.
\end{case}
\begin{case}[\(e_1 > e_k\)]
  If we have reduced any of the first \(k\) parts then we can insert a new part after the middle
  term and the logic follows from the previous case. On the other hand if we reduce the middle part
  then inserting a new part of size \(1\) before it does not create any new occurrences of \(u\).
  The new part cannot play the role of a rightmost one in any occurrence of \(u\) since there are
  only \(k\) parts preceding it. Also, it is not possible to use the new part as a leftmost one
  since all the remaining parts would be required but the first of them is now smaller than \(e_1\).
\end{case}
Taking both of these cases together tells us that no composition contained in \(p\) is
\(u\)-prolific.

\noindent\emph{Any composition not containing \(p\) is not \(u\)-prolific.}

Suppose there existed a composition \(v\) that did not contain \(p\) but was \(u\)-prolific. Write
\(p\) as \(1W1\). Since \(v\) avoids \(p\), and occurrence matching can occur greedily, it is
possible to concatenate additional parts to \(v\) to produce a composition that contains \(1W\) but does
not contain \(p\). That composition will still be \( u \)-prolific since \( v \) was.
So without loss of generality, we may assume
that \(v\) contains \(1W\) but not \(1W1\). Now \(v\) has the form

  \begin{equation*}
    v=v_1v_2\ldots{}v_m
  \end{equation*}

We can greedily match parts of \(1W\) into \(v\), the matching must finish at the part \(v_m\) since
\(v\) does not contain \(1W1\),

\begin{equation}
  \begin{matrix}
    v_1 & v_2 & \phantom{e_1}\ldots{} & x & \ldots & y &\phantom{e_2}\ldots & v_m \\
    \rotatebox[origin=c]{270}{\(\ge\)}&&\rotatebox[origin=c]{270}{\(\ge\)}\phantom{\ldots}&\rotatebox[origin=c]{270}{\(\ge\)}& &\rotatebox[origin=c]{270}{\(\ge\)}&\rotatebox[origin=c]{270}{\(\ge\)}\phantom{\ldots}&\rotatebox[origin=c]{270}{\(\ge\)}\\
    1 & & e_1\ldots & e_{k-1} & & \max(e_1,e_k)& e_2 \ldots & e_k
  \end{matrix}
  \label{eq:greedymatch}
\end{equation}

Consider adding a new part of size \(1\) immediately before the part \(y\) above. This new part cannot be
the initial one in an occurrence of \(u\) as the first layer that the part \(e_1\) can
match into is the part \(y\), and then matching greedily would continue as before and ends when
the part \(e_k\) has been matched into the part \(v_m\) -- so no copy of \( u \) would be found.

If this new part were a right hand \(1\) in an occurrence of \(u\) then we must have seen a part of
size \(e_k\) between the parts \(x\) and \(y\). This means that \(e_k < \max(e_1,e_k)\). Therefore,
if \(e_k \ge e_1\) then \(v\) is not \(u\)-prolific with this part not creating any new occurrences
of \(u\).

In the case that \(e_k < e_1\), consider putting a new part of size \(1\) immediately after \(x\),
this cannot be a right hand \(1\) in an occurrence of \(u\), as the prefix of \(u\) up to this part
is \(\range{1}{e_{k-1}}\) . If it were a left hand \(1\) in an occurrence of \(u\) then the part
\(y\) is the first part that matches \(e_1\), and the greedy match continues as before from this point,
and no new occurrence of
\(u\) will be found. So \(v\) is not \(u\)-prolific.
\end{proof}

\begin{lemma}\label{lem:partuse}
  If \(v = 1\gamma1\) is a minimally prolific composition for \(u= 1\beta1\) then for every part,
  \(v_i\), of \(v\) there exists some insertion of an element into \(1\gamma1\) such that any new
  occurrence of \(u\) must use \(v_i\).
\end{lemma}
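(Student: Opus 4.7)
The plan is to argue by contradiction, using the minimality of \(v\). Suppose some part \(v_i\) violates the conclusion: every insertion of an element into \(v\) produces at least one new occurrence of \(u\) whose support omits the index \(i\). I will then build a strictly smaller \(u\)-prolific composition, contradicting minimality.

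The candidate is \(v'\), the composition obtained from \(v\) by deleting the part at position \(i\). Since \(v'\) has one fewer part than \(v\), we have \(v' \RL v\) but \(v' \neq v\). To show that \(v'\) is still \(u\)-prolific, take an arbitrary extension \(v''\) of \(v'\) and lift it to an extension \(v^*\) of \(v\) by reinserting \(v_i\) at its original position; when the new element of \(v''\) was placed in the gap created by removing \(v_i\), we make a definite choice of which side to place \(v_i\) on. In all cases \(v^*\) and \(v''\) differ only by the presence of the part \(v_i\). By the standing assumption, \(v^*\) contains a new occurrence of \(u\) relative to \(v\) whose support avoids the index \(i\). Any such new occurrence necessarily involves the inserted element, and since it also avoids the part \(v_i\), it uses only parts of \(v''\) together with the inserted element. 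So it is a new occurrence of \(u\) in \(v''\) relative to \(v'\). Hence \(v'\) is \(u\)-prolific, contradicting the minimality of \(v\).

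The main obstacle is the bookkeeping: setting up the lifting of insertions carefully enough that the correspondence between occurrences in \(v^*\) and occurrences in \(v''\) is watertight. This needs particular attention when the insertion into \(v'\) is adjacent to the position formerly occupied by \(v_i\) (so the lift involves a choice of side), and in the boundary cases where \(v_i\) is the first or last part of \(v\) and hence equals \(1\). In those boundary cases \(v'\) may no longer begin or end with a \(1\), but the transfer argument is unaffected since it only requires that the transferred new occurrence use the inserted element and avoid position \(i\).
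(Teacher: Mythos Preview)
Your proposal is correct and follows exactly the same approach as the paper: assume the conclusion fails for some part \(v_i\), and derive a contradiction by showing that deleting \(v_i\) yields a strictly smaller \(u\)-prolific composition. The paper's proof is a two-sentence sketch that leaves the lifting argument implicit, whereas you spell out carefully how to transfer an arbitrary extension of \(v'\) to an extension of \(v\) and pull back the guaranteed new occurrence avoiding \(v_i\); this added bookkeeping is sound and does not diverge from the paper's idea.
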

\begin{proof}
Otherwise every insertion of a 1 creates a new occurrence of \( u \) not using \( v_i \). But then the
composition formed by deleting \( v_i \) from \( v \) is already prolific, contradicting \( v \)'s minimality.
\end{proof}

\begin{theorem}
  If \(1\alpha1\) is self-prolific and \(1\gamma1\) is a minimally prolific composition for
  \(1\beta1\) then the composition \(v = 1\alpha1\gamma1\) is minimally prolific for \(u =
  1\alpha1\beta1\).
\end{theorem}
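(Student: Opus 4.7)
The plan is to show (a) $v = 1\alpha 1\gamma 1$ is $u = 1\alpha 1\beta 1$-prolific, and (b) no proper subcomposition of $v$ is.

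For (a), I would apply the corollary of \cref{thm:coveringprolific}: check that $v$ is covered by $u$ and that inserting a $1$ between any two adjacent parts of $v$ of size greater than $1$ produces a new occurrence of $u$. The only size-$1$ parts of $v$ are the three explicit $1$'s, so any two adjacent parts both of size greater than $1$ lie either entirely in $\alpha$ or entirely in $\gamma$. In the $\alpha$-case, self-prolificity of $1\alpha 1$ gives a new $1\alpha 1$-occurrence inside the modified prefix (using the inserted $1$), which I extend to a $u$-occurrence by appending a $\beta 1$-occurrence in $\gamma 1$; such a $\beta 1$ exists because $1\gamma 1$ contains $1\beta 1$ by \cref{lem:oneocc}. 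In the $\gamma$-case, $1\beta 1$-prolificity of $1\gamma 1$ gives a new $1\beta 1$-occurrence in the modified suffix using the inserted $1$, which I prepend with $1\alpha 1$ from the initial block of $v$; a small case split handles whether the inserted $1$ plays the role of the initial $1$ of the new $1\beta 1$-occurrence (in which case it serves as the middle $1$ of $u$ and we prepend only $1\alpha$ of $u$ from the initial $1\alpha$ of $v$) or a later role (in which case we use the full $1\alpha 1$-prefix of $v$ for $1\alpha 1$ of $u$). For covering, the canonical occurrence just described covers the $1\alpha 1$-prefix of $v$, and parts inside $\gamma 1$ are covered by using the covering of $1\gamma 1$ by $1\beta 1$ (from \cref{thm:coveringprolific}) and prepending from the $1\alpha 1$-prefix of $v$.

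For (b), suppose for contradiction $v^*\prec v$ is $u$-prolific. By \cref{lem:oneocc} $v^*$ contains $u$, and by \cref{thm:coveringprolific} it is covered by $u$. The plan is to split $v^* = P \cdot S$ by locating the first position at which the prefix of $v^*$ finishes containing $1\alpha 1$ of $u$, and to show that $P$ is forced to be $\succeq 1\alpha 1$ (and in fact, by self-prolificity plus $v^*\preceq v$, equal to $1\alpha 1$) while the composition obtained from $S$ by prepending a $1$ is forced to be $1\beta 1$-prolific (and in fact, by minimality of $1\gamma 1$ plus $v^*\preceq v$, equal to $1\gamma 1$). This would give $v^* = v$, the desired contradiction; note that self-prolificity of $1\alpha 1$ already implies that $1\alpha 1$ is minimally self-prolific by \cref{lem:oneocc}, so both halves of the split are pinned down by their respective minimality hypotheses.

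The main obstacle is proving that the suffix obtained from the split is genuinely $1\beta 1$-prolific rather than merely containing $1\beta 1$. I would aim to lift an essential insertion inside the suffix to an essential insertion into $v^*$, so that $u$-prolificity of $v^*$ guarantees a new $u$-occurrence, which (because the $1\alpha 1$-prefix of $u$ has already been consumed by $P$) must arise as an extension of a new $1\beta 1$-occurrence inside the modified suffix. A parallel argument handles the prefix via self-prolificity of $1\alpha 1$.
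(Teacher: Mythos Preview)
Part (a) is essentially correct. One slip: the three explicit 1's are not the \emph{only} size-1 parts of $v$ (both $\alpha$ and $\gamma$ may contain 1's), but the conclusion you draw---that adjacent $>1$ pairs lie entirely within $\alpha$ or within $\gamma$---holds anyway since the explicit 1's are separators; self-prolificity of $1\alpha1$ in fact makes the $\alpha$-case vacuous.

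Part (b) takes a genuinely different route from the paper, and the route has gaps you have not closed. The paper argues via the automaton of \cref{thm:prefixsuffix}: self-prolificity of $1\alpha1$ forces the $\Pro(1\alpha1)$-automaton to be an unbranched chain, and the $\Pro(1\alpha1\beta1)$-automaton is obtained by fusing this chain (at its penultimate state) onto the initial state of the $\Pro(1\beta1)$-automaton, so minimal accepted words are exactly $1\alpha$ followed by a minimal $\Pro(1\beta1)$-word. This structural fact is precisely what your direct argument lacks. First, you assert $P = 1\alpha1$ without proof; nothing you have said forces the shortest prefix of $v^*$ containing $1\alpha1$ to match $1\alpha1$ part-for-part (and indeed the natural split suggested by the automaton is one step earlier, at the shortest prefix containing $1\alpha$). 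Second, your lifting sketch for the ``main obstacle'' handles only insertions, not covering: a part $p$ of $S$ may be covered in $v^*$ by a $u$-occurrence in which $p$'s role lies inside the $1\alpha1$-segment of $u$, and then no $1\beta1$-occurrence in $1S$ through $p$ falls out. Third, concluding $1S = 1\gamma1$ from $1\beta1$-prolificity needs $1S \preceq 1\gamma1$, since $1\gamma1$ need not be the unique minimum of $\Pro(1\beta1)$; establishing that containment again requires control over where the embedding $v^* \preceq v$ sends your prefix $P$, which you have not secured. All three issues dissolve once one knows the first $|\alpha|+1$ transitions of any accepting computation in the $\Pro(u)$-automaton are forced---which is the paper's argument.
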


\begin{proof}
The argument is essentially the same as that for Theorem \ref{thm:prepend-one}. Because \( 1 \alpha 1 \) is self-prolific
the automaton for \( \Pro (1 \alpha 1 ) \) has loop transitions labelled by values less than the next element of \( 1 \alpha 1 \) and
proper transitions to the next state on larger values. It is a single chain with no branches.
But, in considering the automaton for \( \Pro (1 \alpha 1 \beta 1  ) \) the same
transitions apply only the suffix conditions have \( \beta 1 \) appended. Therefore the complete structure of the
 \( \Pro (1 \alpha 1 \beta 1  ) \) automaton is obtained from the \( \Pro (1 \beta 1) \) automaton by fusing its initial state
 (corresponding to an empty prefix) with the penultimate state of the \( \Pro (1 \alpha 1 ) \) automaton (corresponding to a remaining
 suffix requirement of 1.)

In particular the minimal compositions accepted by this automaton are exactly \( 1 \alpha \) followed by a minimal composition
accepted by the \( \Pro(1 \beta 1) \) automaton.
\end{proof}

We have now seen that certain compositions \(u\) have unique minimal \(u\)-prolific compositions.

Now we consider methods for determining the minimal \(u\)-prolific compositions for compositions that are not encompassed by the previous results.

An automaton, \(A\), that accepts minimally prolific compositions (and possibly some others) for the composition \(u\) is easily constructed from the automaton, \(D\) that
accepts all \(u\)-prolific words. Consider what happens if we were to follow a transition that loops in \(D\), if we then reach an accepting state we could find a smaller \(u\)-prolific composition by omitting the part introduced by this loop, therefore we can redirect any transitions that label loops into a non-accepting sink state.
Any interval causing a transition can be replaced by the smallest value it contains, and any larger values can also be redirected to the sink state since any composition obtained by using a larger value causing the same transition can be made smaller by using the smaller value.

\begin{example}
  The automaton for recognising minimal \(1221\)-prolific compositions, as well as some other \(1221\)-prolific compositions.
  \begin{center}
\begin{tikzpicture}[scale=0.75, every node/.style={scale=0.75}]
    \node[state, initial] (q01) {\((0,1)\)};
    \node[state, right of=q01] (q11) {\((1,1)\)};
    \node[state, right of=q11] (q21) {\((2,1)\)};
    \node[state,  right of=q21] (q32) {\((3,2)\)};
    \node[state, below right of=q32] (q42) {\((4,2)\)};
    \node[state, above right of=q42] (q43) {\((4,3)\)};
    \node[state, accepting , right of=q43] (q44) {\((4,4)\)};

    \draw [->] (q01) edge[above] node{1} (q11)
          (q11) edge[above] node{2} (q21)
          (q21) edge[above] node{2} (q32)
          (q32) edge[below left] node{1} (q42)
          (q32) edge[above] node{2} (q43)
          (q42) edge[below right] node{2} (q43)
          (q43) edge[above] node{1} (q44)
          ;

\end{tikzpicture}
  \end{center}
\end{example}

After these modifications the automaton accepts fewer compositions but still accepts only \( u \)-prolific compositions and still accepts
all minimal \( u \)-prolific compositions. If there are no branches then in fact we will have established that there is only one minimal \( u \)-prolific permutation. However, if there are branches then Dijkstra's algorithm can be applied to find a minimal weight path to the accepting state -- where the weight of a path is the sum of its labels. This path must represent a minimal \(u\)-prolific composition, \( v \). But now we can easily modify the automaton to accept only those words accepted by the original one which do not contain \( v \) (simply by keeping track of what prefix of \( v \) has been found and passing to a sink state if we contain \( v \)). If this modification still has accepting computations then we can find a minimum weight word that it accepts which is another minimal \( u \)-prolific composition. Since we know that the set of minimal \( u \)-prolific permutations is finite this procedure terminates with the complete set.

Application of this technique has allowed us to verify the theory
presented in this section as well as find some examples of
interesting behaviour that leads to further questions and avenues for investigation. For example, in the family of compositions having the structure

\begin{equation*}
  M_k = 122\underbrace{1\dots1}_{k\text{ ones}}221
\end{equation*}

\noindent
there are \(k+1\) minimally prolific compositions for the composition \(M_k\) for all \(k\) from \(0\) to \(8\). This gives rise to the following conjecture.

\begin{conj}
  Given any integer \(r\) there exists a composition \(u\) such that the set of minimally \(u\)-prolific compositions has exactly \(r\) elements.
\end{conj}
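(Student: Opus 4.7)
The plan is to prove the conjecture by explicit construction, producing for every $r \geq 0$ some composition with exactly $r$ minimal prolific compositions. For $r = 0$ any composition that does not both begin and end with $1$ suffices: by \cref{thm:startend} its prolific set is empty and so has zero minimal elements. For $r \geq 1$ the natural candidate is $M_{r-1}$, consistent with the computational evidence reported for $0 \leq k \leq 8$ in \cref{sec:MinProl}.

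To prove that $M_k$ has exactly $k+1$ minimal prolific compositions I would work directly with the automaton $A$ described in \cref{sec:MinProl} that accepts the minimal $M_k$-prolific compositions. Because $M_k$ uses only the values $1$ and $2$, its critical intervals are $[1,1]$ and $[2,\infty]$, so every transition of $A$ is labelled by $1$ or $2$. The reachable states are pairs (prefix-matched-length, suffix-requirement-length); as an input word is scanned the suffix requirement shrinks in a controlled way, and the central run of $k$ consecutive $1$s in $M_k$ should introduce $k+1$ branching moments at which a minimum-weight accepting path can choose how much of that central block to share between the two overlapping occurrences of $M_k$ that prolificity forces. I would then exhibit an explicit family $\{v_0, \dots, v_k\}$ of $M_k$-prolific compositions, each satisfying both clauses of the corollary to \cref{thm:coveringprolific}, and verify that the $v_j$ are pairwise incomparable under $\RL$.

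The hard part will be completeness, that is, showing every minimal $M_k$-prolific composition lies in the list. Here I would apply the iterative Dijkstra procedure sketched at the end of \cref{sec:MinProl}: find the minimum-weight accepted word, modify $A$ to forbid that composition as a factor, and repeat. The claim is that the procedure terminates after exactly $k+1$ rounds, and the subtle point is to rule out shifted variants of the $v_j$ that could masquerade as distinct minimal prolifics. Should that direct analysis prove too involved, a fallback is to argue by induction on $k$ using a strengthened form of the splicing theorem that follows \cref{lem:partuse}: since $M_k$ does not itself decompose as $1\alpha1\beta1$ with $1\alpha1$ self-prolific, one would first need to generalise that splicing result to non-self-prolific prefixes, then peel off the central $1$s of $M_k$ one at a time to relate the count for $M_k$ to that for $M_{k-1}$ plus a single new contribution.
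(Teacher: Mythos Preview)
The statement you are attempting to prove is a \emph{conjecture} in the paper, not a theorem: the paper offers no proof, only computational verification that $M_k$ has $k+1$ minimal prolific compositions for $0 \le k \le 8$. So there is no ``paper's own proof'' to compare against, and the relevant question is whether your proposal actually constitutes a proof.

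It does not. What you have written is a research plan rather than an argument. Every substantive step is deferred: you say you \emph{would} analyse the automaton, you \emph{would} exhibit a family $\{v_0,\dots,v_k\}$, you \emph{would} run the iterated Dijkstra procedure and claim it halts after $k+1$ rounds. None of these are carried out. In particular, the explicit compositions $v_j$ are never written down, their pairwise incomparability under $\RL$ is never checked, and the completeness claim (``the procedure terminates after exactly $k+1$ rounds'') is asserted without justification. The automaton $A$ of \cref{sec:MinProl} is explicitly said to accept the minimal prolific compositions \emph{and possibly some others}, so counting accepting paths in $A$ is not by itself a count of minimal prolifics; you would still need the filtering argument, which you only gesture at. Your fallback is even more speculative: you concede that $M_k$ does not decompose in the form required by the splicing theorem, so you would first have to prove a genuinely new generalisation of that result before any induction could begin.

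In short, the $r=0$ case is fine, but for $r\ge 1$ you have identified a plausible candidate family and a plausible line of attack, not a proof. To turn this into one you must at minimum (i) write down the $v_j$ explicitly, (ii) verify each is $M_k$-prolific and minimal, and (iii) give a concrete argument---not a description of an algorithm---that no other minimal $M_k$-prolific composition exists.
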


\section{Future}

The concept of prolificity can be framed in another way. Consider any class of combinatorial objects that carries
a notion of embedding where each embedding of a structure \( A \) into a structure \( B \) is witnessed by
an injective map from \( A \) to \( B \). Then, even if we do not want to count embeddings, it makes sense to say
that \( B \) is covered by \( A \) if the union of the ranges of such maps is \( B \). Now we can say that \( B \) is
\emph{\( A \)-prolific} if every structure containing \( B \) is covered by \( A \). In compositions this definition is
equivalent since we obtained covering as a necessary condition (strictly speaking the counting definition would require
only that for a structure \( C \) containing \( B \) the union of the images of \( A \) should contain \( C \setminus B \)).

Despite this natural framing it seems that this concept has not previously been investigated extensively. We believe that
the previous sections show that there are contexts in which it is of interest -- at the very least for integer
compositions, where a number of open questions still remain. For instance, aside from the two conjectures above we can ask: if \( u \)
is a composition with \( k \) parts, then what is the maximum possible size of the set of minimal \( u \)-prolific compositions?
Note this question does have a finite answer since we may assume that \( u \) is standardised and there are only finitely
many standardised compositions with \( k \) parts.

We can easily consider prolificity in other combinatorial structures, for instance graphs.
If embedding is taken to be as an induced subgraph, it is
easy to see that there are no \( G \)-prolific graphs for any graph \( G \) other than a single vertex. This is because a graph can be extended by adding a new vertex independent of all others, and also by adding a new vertex adjacent to all others. For
\( G \) having more than one vertex, one of these extensions fails to contain new copies of \( G \). However, it is easy to establish that, in the class of graphs of bounded degree \( d \), there exist graphs that are prolific for the graph that has three vertices and one edge.

In general we believe that the use of automata in considering prolific structures, as we were able to do for compositions, is a powerful tool. This suggests that further investigations should concentrate on classes that allow the representation of structures as words over some alphabet.

\printbibliography
\end{document}